\documentclass[journal]{IEEEtran}
\usepackage{amsmath}

\usepackage[hidelinks]{hyperref}
\usepackage{lineno}
\usepackage{subcaption}
\usepackage{hyperref}
\usepackage{cite}
\usepackage{amsmath,amssymb,amsfonts, cases} 
\usepackage{amsmath}
\usepackage{amsthm}

\usepackage{algorithmic}
\usepackage{graphicx}
\usepackage{textcomp}
\usepackage{xcolor}
\usepackage{graphicx}
\usepackage{float}
\usepackage{subcaption}
\usepackage{amsmath,mleftright}
\usepackage{amsfonts,amssymb}
\usepackage{mathrsfs}
\usepackage{mathtools}
\usepackage{algorithm}
\usepackage{algorithmic}
\usepackage{bm}
\usepackage{multirow}
\usepackage{array}
\usepackage{amssymb}
\usepackage{amsmath}
\usepackage{cite}
\usepackage{url}
\usepackage{xcolor}
\usepackage{cite,graphicx,amsmath,amssymb}
\usepackage{fancyhdr}
\usepackage{mdwmath}
\usepackage{mdwtab}
\usepackage{caption}
\usepackage{amsthm}
\usepackage{setspace}
\usepackage{bm}
\usepackage{mathtools}
\usepackage{dsfont}
\usepackage{bbm}
\usepackage{framed}

\newtheorem{theorem}{Theorem}

\newtheorem{lemma}{Lemma}

\newtheorem{corollary}{Corollary}

\makeatletter
\newcommand{\biggg}{\bBigg@{3}}
\newcommand{\Biggg}{\bBigg@{3.5}}
\makeatother
\makeatletter
\renewcommand{\maketag@@@}[1]{\hbox{\m@th\normalsize\normalfont#1}}%
\makeatother
\def\BibTeX{{\rm B\kern-.05em{\sc i\kern-.025em b}\kern-.08em
    T\kern-.1667em\lower.7ex\hbox{E}\kern-.125emX}}
    \expandafter\def\expandafter\normalsize\expandafter{%
    \normalsize%
    \setlength\abovedisplayskip{4pt}%
    \setlength\belowdisplayskip{4pt}%
    \setlength\abovedisplayshortskip{2pt}%
    \setlength\belowdisplayshortskip{2pt}%
}
\allowdisplaybreaks[4]
\begin{document}
\title{Optimal Analog Beamforming and Power Allocation for Multiuser TDMA Systems}
\author{Songnan~Gu, Chongjun~Ouyang, Hao~Jiang, and Xingqi Zhang\vspace{-10pt}
\thanks{S. Gu, C. Ouyang, and H. Jiang are with the School of Electronic Engineering and Computer Science, Queen Mary University of London, London, E1 4NS, U.K. (e-mail: s.gu@se23.qmul.ac.uk; \{c.ouyang, hao.jiang\}@qmul.ac.uk).} 
\thanks{X. Zhang is with Department of Electrical and Computer Engineering, University of Alberta, Edmonton AB, T6G 2R3, Canada (email: xingqi.zhang@ualberta.ca).}
}
\maketitle
\begin{abstract}
The joint design of analog beamforming and power allocation is investigated for a single radio-frequency chain multiuser time-division multiple access system under a max-min signal-to-noise ratio (SNR) criterion. A hardware-efficient phased-array architecture is considered, where the beamforming vector is shared by all users and is subject to constant-modulus constraints. For any fixed analog beamformer, the optimal power allocation is first derived in closed form, by which the original problem is reduced to phase-shift optimization only. Then, globally optimal branch-and-bound (BB) algorithms are developed for discrete and continuous phase shifts. Numerical results show that the proposed BB algorithms achieve the global optimum and provide reliable benchmarks for evaluating the performance gap of low-complexity alternating-optimization methods.
\end{abstract}
\begin{IEEEkeywords}
Analog beamforming, branch-and-bound, max-min, optimal design, power allocation.
\end{IEEEkeywords}
\section{Introduction}
Analog beamforming with phased arrays is a hardware-efficient transmission architecture for multi-antenna systems. It has received particular attention in millimeter-wave communications, where directional beamforming is essential to compensate for severe propagation loss \cite{heath2016overview}. More generally, it is also attractive in practical deployments that require low radio-frequency (RF)-chain cost and power consumption, such as fixed wireless access and small-cell backhaul \cite{pi2016millimeter}. These considerations make the analog beamforming architecture a meaningful architecture for low-cost multiuser transmission.

Among various implementations, the single-RF-chain architecture is especially attractive due to its low hardware complexity. In this architecture, one RF chain is connected to all antennas through analog phase shifters, which imposes constant-modulus constraints on the beamforming weights and makes the beam design problem highly non-convex. In multiuser systems with only one RF chain, time-division multiple access (TDMA) is a natural transmission strategy, since only one data stream can be transmitted at each time instant. Under this setting, analog beamforming and power allocation should be designed jointly to improve user fairness.

A large body of existing work has studied phase-shift design under constant-modulus constraints. Representative approaches include semidefinite relaxation, majorization-minimization, manifold optimization, and alternating optimization (AO) \cite{luo2010sdr,yu2016alternating,arora2021efficient,wu2024analog}. Although these methods often achieve good performance with low complexity, they generally provide heuristic or locally optimal solutions. Consequently, the global-optimal design for single-RF-chain multiuser TDMA systems remains much less understood, and the optimality loss of practical low-complexity methods is difficult to quantify. A related problem was considered in \cite{xiao2019user}, where an AO-based method was developed, but a globally optimal benchmark is still unavailable.

To address this issue, this letter studies the joint design of analog beamforming and power allocation for a single-RF-chain multiuser TDMA system under a max-min fairness criterion. We first derive the optimal power allocation in closed form for any fixed beamformer. Based on this result, the original problem is reduced to phase-shift optimization only. We then develop globally optimal branch-and-bound (BB) algorithms \cite{narendra1977branch} for three representative phase-shift settings, namely binary phase shifts, general $M$-ary discrete phase shifts, and continuous phase shifts. Numerical results show that the proposed BB algorithms provide the global optimum and serve as reliable benchmarks for evaluating the performance gap of low-complexity AO-based designs.

\section{System Model}
We consider a downlink single-RF-chain phased-array system with $N$ antennas serving $K$ users. Since only one RF chain is available, the transmitter employs analog beamforming only. The users are served sequentially according to a TDMA protocol. Let $\mathcal{K} \triangleq \{1, \ldots, K\}$ denote the user set. The transmitted signal intended for user $k\in{\mathcal{K}}$ is $\mathbf{x}_k = \sqrt{P_k}\mathbf{v}s_k$, where $P_k$ is the transmit power allocated to user $k$, and $s_k$ denotes the normalized information symbol with $\mathbb{E}[s_k] = 0$ and $\mathbb{E}[|s_k|^2] = 1$. The analog beamforming vector $\mathbf{v} \in \mathbb{C}^{N \times 1}$ is given by \cite{heath2016overview}
\begin{equation}
\mathbf{v} = \frac{1}{\sqrt{N}} \left[ e^{-j\theta_1}, \ldots, e^{-j\theta_N} \right]^T,
\end{equation}
where $\theta_n \in [0, 2\pi)$ denotes the phase shift introduced by the phase shifter connected to the $n$th antenna, for $n = 1, \ldots, N$. 

The received signal at user $k$ is given by
\begin{equation}
y_k = \mathbf{h}_k^H \mathbf{x}_k + n_k = \sqrt{P_k}\mathbf{h}_k^H \mathbf{v}s_k + n_k,
\end{equation}
where $\mathbf{h}_k \in \mathbb{C}^{N \times 1}$ denotes the channel vector from the phased array to user $k$, and $n_k \sim \mathcal{CN}(0, \sigma_k^2)$ is the additive white Gaussian noise with variance $\sigma_k^2$. Accordingly, the received signal-to-noise ratio (SNR) at user $k$ is $\gamma_k = \frac{P_k}{\sigma_k^2} \left| \mathbf{h}_k^H \mathbf{v} \right|^2$. For simplicity, we assume identical noise power across all users, that is, $\sigma_k^2 = \sigma^2$ for all $k\in{\mathcal{K}}$.

To ensure fairness, we maximize the minimum received SNR among all users. The resulting optimization problem is formulated as follows:
\begin{subequations}
\begin{align}
\max_{\boldsymbol{\theta}, \{P_k\}} \quad
& \min_{k \in \mathcal{K}} \gamma_k(\boldsymbol{\theta}, P_k) \\
\text{s.t.} \quad
& \sum_{k=1}^{K} P_k \le P, \label{Per_User_Power_Cons1}\\
& P_k \ge 0, \quad \forall k \in \mathcal{K}, \label{Per_User_Power_Cons2}\\
& \theta_n \in \mathcal{F}, \quad \forall n = 1, \ldots, N,\label{Phase_Shifter_Cons}
\end{align}
\end{subequations}
where $P$ denotes the total transmit power budget, $\boldsymbol{\theta}=[\theta_1,\ldots,\theta_N]^T$, and $\mathcal{F}$ denotes the feasible phase-shift set. For example, $\mathcal{F} = [0, 2\pi)$ corresponds to continuous-phase beamforming. By introducing an auxiliary variable $t$, the above max-min problem can be equivalently rewritten in epigraph form as follows:
\begin{subequations}
\begin{align}
\min_{\boldsymbol{\theta}, \{P_k\}, t} \quad
& -t \\
\text{s.t.} \quad
& \gamma_k \ge t, \quad \forall k \in \mathcal{K}, \\
& \eqref{Per_User_Power_Cons1}, \eqref{Per_User_Power_Cons2}, \eqref{Phase_Shifter_Cons}.
\end{align}
\end{subequations}

The above problem is a non-convex optimization problem. The difficulty comes from two aspects. The power variables and the phase shifts are coupled in the objective, and the analog beamformer must satisfy the unit-modulus constraint. In what follows, we first derive the optimal power allocation for a fixed beamforming vector by using convex optimization. This step reduces the original joint design problem to a phase-shift optimization problem. We then optimize the analog beamformer for several typical feasible phase-shift sets.

\section{Optimal Power Allocation for Fixed Beamforming}
For a given beamforming vector $\mathbf{v}$, the subproblem with respect to $\{P_k\}$ and $t$ is convex. Define the effective channel gain of user $k$ as $G_k \triangleq \frac{1}{N\sigma^2} \left| \mathbf{h}_k^H \mathbf{w} \right|^2>0$ with $\mathbf{w} \triangleq  \left[ e^{-j\theta_1}, \dots, e^{-j\theta_N} \right]^T$. The Lagrangian function can be written as follows \cite{chi2017convex}:
\begin{equation}
\mathcal{L} = -t + \sum_{k=1}^K \lambda_k (t - P_k G_k)
+ \mu \left( \sum_{k=1}^K P_k - P \right)
- \sum_{k=1}^K \eta_k P_k,
\end{equation}
where $\lambda_k \ge 0$, $\mu \ge 0$, and $\eta_k \ge 0$ are the Lagrange multipliers associated with the SNR constraints, the total power constraint, and the non-negativity constraints, respectively.

The Karush-Kuhn-Tucker conditions for the optimal solution $\{P_k^*\}$ and $ t^*$ are given by
\begin{subequations}
\begin{align}
&\frac{\partial \mathcal{L}}{\partial t} = -1 + \sum_{k=1}^K \lambda_k = 0, \label{kkt:t} \\
&\frac{\partial \mathcal{L}}{\partial P_k} = -\lambda_k G_k + \mu - \eta_k = 0, \quad \forall k \in \mathcal{K}, \label{kkt:pk} \\
&\lambda_k (t - P_k G_k) = 0, \quad \forall k \in \mathcal{K}, \label{kkt:slack1} \\
&\mu \left( \sum_{k=1}^K P_k - P \right) = 0, \label{kkt:slack2} \\
&\eta_k P_k = 0, \quad \forall k \in \mathcal{K}. \label{kkt:slack3}
\end{align}
\end{subequations}
From the stationarity condition with respect to $t$, we have $\sum_k \lambda_k = 1$. Hence, at least one $\lambda_k$ is strictly positive. By complementary slackness, this gives $t = P_k G_k$ for every user with $\lambda_k > 0$. Since the objective is to maximize the minimum SNR, all users must achieve the same SNR at the optimum. Otherwise, power can be reallocated from a user with a larger SNR to a user with a smaller SNR and thereby increase the minimum SNR. Therefore, $t=P_kG_k$, $\forall k\in{\mathcal{K}}$. Moreover, the total power budget must be fully used at the optimum, which yields $\sum_{k=1}^K P_k = P$. Substituting $P_K=\frac{t}{G_k}$ into $\sum_{k=1}^K P_k = P$ gives
\begin{equation}
t^* = \frac{P}{\sum_{k=1}^K 1/G_k}, \quad
P_k^* = \frac{P/G_k}{\sum_{j=1}^K 1/G_j}, \quad \forall k \in \mathcal{K}.
\end{equation}
This leads to the following theorem.
\vspace{-5pt}
\begin{theorem}
For any fixed feasible beamforming vector $\mathbf{w}$, the optimal power allocation is given by the above closed-form expression. The corresponding maximum minimum SNR is $t^*$, and all users achieve the same received SNR at the optimum, namely, $\gamma_k(\boldsymbol{\theta}, P_k^*) = t^*$ holds for all $k \in \mathcal{K}$.
\end{theorem}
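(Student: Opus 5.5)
For fixed $\mathbf{w}$ with all $G_k>0$, I would prove the theorem by a direct upper bound matched by a feasible point, rather than by leaning on the KKT heuristics above. This avoids the informal ``reallocate power'' step. The problem in $(\{P_k\},t)$ is a linear program, so it suffices to show two things: that $t^*$ upper-bounds the objective of every feasible point, and that the stated $\{P_k^*\}$ is feasible and attains $t^*$.

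\emph{Upper bound.} Take any feasible $\{P_k\}$ with $\sum_k P_k\le P$ and $P_k\ge 0$, and let $t=\min_k P_kG_k$. Then $P_k\ge t/G_k$ for every $k$. Summing over $k$ gives
\begin{equation*}
t\sum_{k=1}^K \frac{1}{G_k}\le \sum_{k=1}^K P_k\le P,
\end{equation*}
so $t\le P/\sum_k 1/G_k=t^*$.

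\emph{Achievability.} Set $P_k^*=\frac{P/G_k}{\sum_j 1/G_j}$. These powers are nonnegative and sum exactly to $P$. Each user gets $\gamma_k=P_k^*G_k=t^*$, so the bound is attained. Hence $t^*$ is the maximum minimum SNR, and $\{P_k^*\}$ is an optimal allocation with equal SNRs across users.

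\emph{Uniqueness.} The theorem says ``the'' optimal allocation, so I would add that the optimum is unique. Suppose some feasible $\{P_k\}$ attains $\min_k P_kG_k=t^*$. Then every inequality in the chain above must hold with equality. Since $P_k\ge t^*/G_k$ for all $k$ and $\sum_k P_k\le P=\sum_k t^*/G_k$, equality forces $P_k=t^*/G_k=P_k^*$ for every $k$. So every optimal solution equalizes the SNRs, which is the claim $\gamma_k(\boldsymbol{\theta},P_k^*)=t^*$ for all $k$.

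There is no real obstacle here. The only care point is that $G_k>0$, which makes the division legitimate. Optionally, for consistency with the KKT derivation, one can exhibit multipliers certifying optimality: $\lambda_k=\frac{1/G_k}{\sum_j 1/G_j}$, $\mu=\frac{1}{\sum_j 1/G_j}$ and $\eta_k=0$. These satisfy the stationarity conditions, and since the problem is convex they confirm global optimality.
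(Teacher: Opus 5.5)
Your proof is correct, and it takes a different route from the paper's. The paper argues from the KKT system. Stationarity in $t$ gives $\sum_k\lambda_k=1$, so at least one SNR constraint is active. The paper then asserts, through an informal power-reallocation argument, that all SNRs must be equal at the optimum and that the budget is fully used, and solves $t=P_kG_k$, $\sum_kP_k=P$. It cites KKT sufficiency, but it actually uses the conditions as necessary ones and never exhibits multipliers.

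You instead verify the candidate directly. Summing $P_k\ge t/G_k$ gives a bound valid at every feasible point, and the closed form attains it. The equality case of the same chain then yields uniqueness, equal SNRs, and full use of the budget as consequences rather than assumptions.

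Your bound is exactly weak duality with the multipliers $\lambda_k=(1/G_k)/\sum_j 1/G_j$, $\mu=1/\sum_j 1/G_j$, $\eta_k=0$ that you list at the end. So the two arguments are the primal and dual sides of one LP certificate. The paper's version shows how the closed form is found. Yours is self-contained, avoids both the appeal to KKT necessity and the reallocation heuristic, and makes explicit the uniqueness implied by the phrase ``the optimal power allocation.''
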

\vspace{-5pt}
\begin{IEEEproof}
Since the subproblem with fixed $\mathbf{w}$ is convex, the KKT conditions are sufficient for optimality. The stationarity condition gives $\sum_k \lambda_k = 1$, and thus at least one SNR constraint is active. At the optimum, all users must have the same received SNR. Otherwise, one can increase the minimum SNR through power reallocation. Combining $t = P_k G_k$ for $k$ with $\sum_k P_k = P$ yields $\gamma_k(\boldsymbol{\theta}, P_k^*) = t^*$ for all $k$.
\end{IEEEproof}
By substituting the optimal power allocation into the original problem and removing the constant factors $P$ and $\sigma^2$, the joint design problem reduces to the following:
\begin{equation}
\label{eq:sum_inverse_gain}
\begin{aligned}
\min_{\boldsymbol{\theta}} \quad & \sum_{k=1}^{K} \frac{1}{\left| \mathbf{h}_k^H \mathbf{w} \right|^2} \\
\text{s.t.} \quad & \theta_n \in \mathcal{F}, \quad \forall n = 1, \dots, N.
\end{aligned}
\end{equation}
In the following, we focus on the analog beamforming design for problem \eqref{eq:sum_inverse_gain}.
\section{Phase-Shift Design}
Problem \eqref{eq:sum_inverse_gain} is non-convex, and a closed-form optimal solution is difficult to obtain. A natural baseline is an AO method that updates the $N$ phase shifts sequentially in an element-wise manner. However, this approach may get trapped in local optima. To address this issue, we develop BB algorithms under three representative phase-shift constraints, and use numerical results to quantify the performance gap between the AO-based method and the proposed globally optimal BB designs.

Specifically, we consider the following cases. In the binary phase-shift case, $\mathcal{F}=\{0,\pi\}$, which is typical for low-cost phased-array implementations. In the $M$-ary phase-shift case, where $\mathcal{F} = \{2\pi m/M\}_{m=0}^{M-1}$, the model captures practical finite-resolution phase shifters. In the continuous phase-shift case, $\mathcal{F}=[0, 2\pi)$. This case serves as an ideal benchmark, although it usually incurs a higher implementation cost.

\subsection{Binary-Phase Design}
We first consider the binary phase-shift case, where $\mathcal{F}=\{0,\pi\}$. In this case, each beamforming entry reduces to a sign in $\{-1,+1\}$. Therefore, \eqref{eq:sum_inverse_gain} can be rewritten as follows:
\begin{equation}
\label{eq:binary_beamforming}
\min_{\mathbf{w}\in\{-1,+1\}^N} 
f(\mathbf{w})=\sum_{k=1}^K\frac{1}{\mathbf{w}^T\mathbf{R}_k\mathbf{w}}
\quad \text{s.t.} \quad w_1=1,
\end{equation}
where $\mathbf{R}_k=\Re\{\mathbf{h}_k\mathbf{h}_k^H\}$ and $w_n$ represents the $n$th element of ${\mathbf{w}}$. The constraint $w_1=1$ removes the phase-inversion symmetry and reduces the search space by half. To develop an efficient BB algorithm, we derive a lower bound for $f(\mathbf w)$. Let $\mathcal G$ and $\mathcal H$ denote the sets of fixed and free indices, respectively, and let $d=|\mathcal G|$. Define $\mathbf R\triangleq \sum_{k=1}^{K}\mathbf R_k$, and partition $\mathbf w$ and $\mathbf R$ according to $\mathcal G$ and $\mathcal H$ as follows:
\begin{equation}
\mathbf w=
\begin{bmatrix}
\mathbf w_{\mathcal G}\\
\mathbf w_{\mathcal H}
\end{bmatrix},
\qquad
\mathbf R=
\begin{bmatrix}
\mathbf R_{\mathcal {GG}} & \mathbf R_{\mathcal {GH}}\\
\mathbf R_{\mathcal {HG}} & \mathbf R_{\mathcal {HH}}
\end{bmatrix},
\end{equation}
where $\mathbf w_{\mathcal G}\in{\mathbb{R}}^{d\times1}$ and $\mathbf w_{\mathcal H}\in{\mathbb{R}}^{(N-d)\times1}$ collect the fixed and free entries of $\mathbf w$, respectively, and $\mathbf R_{\mathcal {GG}}\in{\mathbb{R}}^{d\times d}$, $\mathbf R_{\mathcal {GH}}\in{\mathbb{R}}^{d\times(N-d)}$, and $\mathbf R_{\mathcal {HH}}\in{\mathbb{R}}^{(N-d)\times(N-d)}$ denote the corresponding submatrices. The following lemma provides an upper bound on the quadratic form $\mathbf w^T\mathbf R\mathbf w$.
\vspace{-5pt}
\begin{lemma}\label{Lemma_UB_Binary}
For any feasible assignment of the free variables $\mathbf w_{\mathcal H}\in\{-1,+1\}^{N-d}$, the quadratic form $\mathbf w^T\mathbf R\mathbf w$ satisfies
\begin{equation}
\mathbf w^T\mathbf R\mathbf w \le \mathrm{UB}_{\mathrm{tot}},
\end{equation}
where
\begin{equation}
\mathrm{UB}_{\mathrm{tot}}
=
C_{\mathrm{tot}}
+
2\left\|\mathbf w_{\mathcal G}^T\mathbf R_{\mathcal {GH}}\right\|_1
+
(N-d)\lambda_{\max}\left(\mathbf R_{\mathcal {HH}}\right),
\tag{10}
\end{equation}
and where $C_{\rm tot}\triangleq\mathbf{w}_{\mathcal{G}}^T\mathbf{R}_{\mathcal{GG}}\mathbf{w}_{\mathcal{G}}$, and $\lambda_{\max}\left(\cdot\right)$ returns the principal eigenvalue.
\end{lemma}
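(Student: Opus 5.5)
The plan is to expand the quadratic form along the partition into fixed and free indices, and bound each of the three resulting terms separately. Since $\mathbf R=\sum_k\Re\{\mathbf h_k\mathbf h_k^H\}$ is real and symmetric, we have $\mathbf R_{\mathcal{HG}}=\mathbf R_{\mathcal{GH}}^T$. Hence
\begin{equation*}
\mathbf w^T\mathbf R\mathbf w
=\mathbf w_{\mathcal G}^T\mathbf R_{\mathcal{GG}}\mathbf w_{\mathcal G}
+2\,\mathbf w_{\mathcal G}^T\mathbf R_{\mathcal{GH}}\mathbf w_{\mathcal H}
+\mathbf w_{\mathcal H}^T\mathbf R_{\mathcal{HH}}\mathbf w_{\mathcal H}.
\end{equation*}
The first term does not depend on the free variables and equals $C_{\rm tot}$ by definition, so nothing needs to be done there.

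For the cross term, let $\mathbf a^T\triangleq\mathbf w_{\mathcal G}^T\mathbf R_{\mathcal{GH}}$, which is a fixed real row vector. Because every entry of $\mathbf w_{\mathcal H}$ lies in $\{-1,+1\}$, we get $\mathbf a^T\mathbf w_{\mathcal H}\le\sum_i|a_i|\,|w_{\mathcal H,i}|=\|\mathbf a\|_1$. This is Hölder's inequality with $\|\mathbf w_{\mathcal H}\|_\infty=1$, and the bound is attained by $\mathbf w_{\mathcal H}=\mathrm{sign}(\mathbf a)$. This step yields the term $2\|\mathbf w_{\mathcal G}^T\mathbf R_{\mathcal{GH}}\|_1$.

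For the last term, $\mathbf R_{\mathcal{HH}}$ is a principal submatrix of a real symmetric matrix, so it is itself real symmetric. The Rayleigh quotient bound then gives $\mathbf w_{\mathcal H}^T\mathbf R_{\mathcal{HH}}\mathbf w_{\mathcal H}\le\lambda_{\max}(\mathbf R_{\mathcal{HH}})\|\mathbf w_{\mathcal H}\|_2^2$. Since $\|\mathbf w_{\mathcal H}\|_2^2=N-d$ for any $\pm1$ vector, this yields $(N-d)\lambda_{\max}(\mathbf R_{\mathcal{HH}})$. No sign assumption on $\lambda_{\max}$ is needed here, because $\|\mathbf w_{\mathcal H}\|_2^2$ equals $N-d$ exactly rather than being merely bounded by it. (In fact $\mathbf R\succeq\mathbf 0$, since $\mathbf x^T\Re\{\mathbf h\mathbf h^H\}\mathbf x=|\mathbf h^H\mathbf x|^2$ for real $\mathbf x$, so $\lambda_{\max}\ge0$ anyway.)

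Summing the three bounds gives $\mathrm{UB}_{\mathrm{tot}}$. The only point requiring care is the symmetry of $\mathbf R$, which is needed both to merge the two cross terms into $2\,\mathbf w_{\mathcal G}^T\mathbf R_{\mathcal{GH}}\mathbf w_{\mathcal H}$ and to make $\lambda_{\max}$ a valid Rayleigh-quotient bound. It holds because $\mathbf h_k\mathbf h_k^H$ is Hermitian, so its real part is symmetric. The cross and quadratic terms are bounded independently, which makes $\mathrm{UB}_{\mathrm{tot}}$ valid but possibly loose, and that suffices for the lemma.
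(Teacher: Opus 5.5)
Your proof is correct and follows the paper's argument essentially step for step. You use the same block expansion, the same $\ell_1$ (H\"older) bound on the cross term via $\mathbf w_{\mathcal H}\in\{-1,+1\}^{N-d}$, and the same Rayleigh-quotient bound with $\|\mathbf w_{\mathcal H}\|_2^2=N-d$. Your extra remarks are valid refinements of the paper's terser justification: symmetry of $\mathbf R$ is what justifies merging the two cross terms, and only symmetry of $\mathbf R_{\mathcal{HH}}$, not positive semidefiniteness, is needed because $\|\mathbf w_{\mathcal H}\|_2^2$ equals $N-d$ exactly.
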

\vspace{-5pt}
\begin{IEEEproof}
By block expansion, we have
\begin{align}
{\mathbf{w}}^T{\mathbf{R}}{\mathbf{w}}=\mathbf{w}_{\mathcal{G}}^T\mathbf{R}_{\mathcal{GG}}\mathbf{w}_{\mathcal{G}}+2\mathbf{w}_{\mathcal{G}}^T\mathbf{R}_{\mathcal{GH}}\mathbf{w}_{\mathcal{H}}
+\mathbf{w}_{\mathcal{H}}^T\mathbf{R}_{\mathcal{HH}}\mathbf{w}_{\mathcal{H}}\nonumber.
\end{align}
The first term is exactly $C_{\mathrm{tot}}$. For the cross term, since each entry of $\mathbf w_{\mathcal H}$ belongs to $\{-1,+1\}$, we obtain
\begin{equation}
\mathbf{w}_{\mathcal{G}}^T\mathbf{R}_{\mathcal{GH}}\mathbf{w}_{\mathcal{H}}
\le
\left\|\mathbf w_{\mathcal G}^T\mathbf R_{\mathcal {GH}}\right\|_1.
\end{equation}
For the last term, $\mathbf R_{\mathcal {HH}}$ is a principal submatrix of the positive semidefinite matrix $\mathbf R$, and is thus positive semidefinite. By the Rayleigh quotient bound,
\begin{equation}
\mathbf{w}_{\mathcal{H}}^T\mathbf{R}_{\mathcal{HH}}\mathbf{w}_{\mathcal{H}}
\le
\lambda_{\max}\left(\mathbf R_{\mathcal {HH}}\right)\|\mathbf w_{\mathcal H}\|_2^2.
\end{equation}
Since $\mathbf w_{\mathcal H}\in\{-1,+1\}^{N-d}$, we have $\|\mathbf w_{\mathcal H}\|_2^2=N-d$. Therefore, $\mathbf{w}_{\mathcal{H}}^T\mathbf{R}_{\mathcal{HH}}\mathbf{w}_{\mathcal{H}}\le (N-d)\lambda_{\max}\left(\mathbf R_{\mathcal {HH}}\right)$. Combining the above inequalities proves the lemma.
\end{IEEEproof}
Based on Lemma \ref{Lemma_UB_Binary}, we next derive a lower bound for the objective function.
\vspace{-5pt}
\begin{theorem}
For any feasible $\mathbf w\in\{-1,+1\}^N$ with $w_1=1$, the objective function satisfies $f(\mathbf w)\ge \mathrm{LB}\triangleq \frac{K^2}{\mathrm{UB}_{\mathrm{tot}}}$.
\end{theorem}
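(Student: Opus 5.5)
The plan is to combine a harmonic--arithmetic mean (Cauchy--Schwarz) inequality with Lemma~\ref{Lemma_UB_Binary}. Write $a_k\triangleq\mathbf w^T\mathbf R_k\mathbf w$ for $k\in\mathcal K$.

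\emph{Step 1: positivity.} Each $\mathbf R_k=\Re\{\mathbf h_k\mathbf h_k^H\}$ is positive semidefinite, so $a_k\ge 0$. For real $\mathbf w$,
\begin{equation}
a_k=|\mathbf h_k^H\mathbf w|^2.
\end{equation}
The objective $f(\mathbf w)$ is only defined when every $a_k>0$. This is the standing assumption $G_k>0$ from Section~III. Hence we restrict attention to feasible $\mathbf w$ with all $a_k>0$.

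\emph{Step 2: mean inequality.} By Cauchy--Schwarz, $\left(\sum_k a_k\right)\left(\sum_k 1/a_k\right)\ge\left(\sum_k \sqrt{a_k}\cdot 1/\sqrt{a_k}\right)^2=K^2$. Therefore
\begin{equation}
f(\mathbf w)=\sum_{k=1}^K\frac{1}{a_k}\ge\frac{K^2}{\sum_{k=1}^K a_k}=\frac{K^2}{\mathbf w^T\mathbf R\mathbf w},
\end{equation}
where we used linearity, $\sum_k a_k=\mathbf w^T\left(\sum_k\mathbf R_k\right)\mathbf w=\mathbf w^T\mathbf R\mathbf w$.

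\emph{Step 3: invoke the lemma.} Fix the partition $(\mathcal G,\mathcal H)$ corresponding to the current branch-and-bound node. Any feasible completion $\mathbf w$ has the given $\mathbf w_{\mathcal G}$ and some $\mathbf w_{\mathcal H}\in\{-1,+1\}^{N-d}$. Lemma~\ref{Lemma_UB_Binary} then gives
\begin{equation}
0<\mathbf w^T\mathbf R\mathbf w\le\mathrm{UB}_{\mathrm{tot}}.
\end{equation}
In particular $\mathrm{UB}_{\mathrm{tot}}>0$, and since $x\mapsto K^2/x$ is decreasing on $(0,\infty)$,
\begin{equation}
f(\mathbf w)\ge\frac{K^2}{\mathbf w^T\mathbf R\mathbf w}\ge\frac{K^2}{\mathrm{UB}_{\mathrm{tot}}}=\mathrm{LB}.
\end{equation}

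\emph{Main obstacle.} The only delicate point is bookkeeping, not analysis. The statement should be read as holding for every feasible $\mathbf w$ consistent with the fixed entries $\mathbf w_{\mathcal G}$ of the node, because $\mathrm{UB}_{\mathrm{tot}}$ depends on $\mathcal G$ and $\mathbf w_{\mathcal G}$. The constraint $w_1=1$ just means $1\in\mathcal G$ once branching starts, and it does not affect the bound. We also need $\mathrm{UB}_{\mathrm{tot}}>0$ so that the division is legitimate, which Step 3 supplies from positivity of $\mathbf w^T\mathbf R\mathbf w$.
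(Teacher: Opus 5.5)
Your proof is correct and follows essentially the same route as the paper: the arithmetic--harmonic mean (equivalently Cauchy--Schwarz) inequality gives $f(\mathbf w)\ge K^2/\mathbf w^T\mathbf R\mathbf w$, and Lemma~\ref{Lemma_UB_Binary} then bounds the denominator by $\mathrm{UB}_{\mathrm{tot}}$. Your extra remarks make explicit two points the paper leaves implicit: each $\mathbf w^T\mathbf R_k\mathbf w$ is positive, so $\mathrm{UB}_{\mathrm{tot}}>0$ and the division is legitimate, and the bound holds only for completions consistent with the node's fixed entries $\mathbf w_{\mathcal G}$.
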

\vspace{-5pt}
\begin{IEEEproof}
By the arithmetic-harmonic mean inequality,
\begin{equation}
f(\mathbf w)=\sum_{k=1}^{K}\frac{1}{\mathbf w^T\mathbf R_k\mathbf w}
\ge
\frac{K^2}{\sum_{k=1}^{K}\mathbf w^T\mathbf R_k\mathbf w}
=
\frac{K^2}{\mathbf w^T\mathbf R\mathbf w}.
\end{equation}
By Lemma \ref{Lemma_UB_Binary}, we have $\mathbf w^T\mathbf R\mathbf w \le \mathrm{UB}_{\mathrm{tot}}$. Hence,
\begin{equation}
f(\mathbf w)\ge \frac{K^2}{\mathrm{UB}_{\mathrm{tot}}}=\mathrm{LB}.
\end{equation}
This completes the proof.
\end{IEEEproof}
During the BB procedure \cite{narendra1977branch}, any node with $\mathrm{LB}\ge f^*$, where $f^*$ denotes the current best objective value, is pruned. The overall procedure is summarized in Algorithm \ref{alg:BB_Binary}.

\begin{algorithm}[t]
\algsetup{linenosize=\tiny} \scriptsize
\caption{BB Algorithm for Binary Phase Beamforming}
\label{alg:BB_Binary}
\begin{algorithmic}[1]
\REQUIRE $\mathbf{R}$, $\{\mathbf{R}_k\}_{k=1}^K$
\ENSURE $\mathbf{w}^*$, $f^*$
\STATE $f^*\leftarrow\infty$, $\mathcal{Q}\leftarrow\{\text{root}\}$, $w_1=1$, $\mathcal{G}=\{1\}$, $\mathcal{H}=\{2,\dots,N\}$
\WHILE{$\mathcal{Q}\neq\emptyset$}
\STATE Extract node ($\mathcal{G}$, $\mathcal{H}$); compute ${\rm{UB}}_{\rm tot}$ and ${\rm{LB}}=\frac{K^2}{{\rm{UB}}_{\rm tot}}$
\IF{${\rm{LB}}<f^*$}
\IF{$|\mathcal{G}|=N$}
\STATE $f(\mathbf{w})=\sum_k(\mathbf{w}^T\mathbf{R}_k\mathbf{w})^{-1}$; update $f^*,\mathbf{w}^*$ if smaller
\ELSE
\STATE Pick $n\in\mathcal{H}$; branch $w_n\in\{+1,-1\}$; push children
\ENDIF\ENDIF
\ENDWHILE
\RETURN $\mathbf{w}^*$, $f^*$
\end{algorithmic}
\end{algorithm}

\subsection{$M$-ary-Phase Design}
While binary beamforming provides a low-cost baseline, many phased arrays employ higher-resolution phase shifters to improve beamforming gain. We next extend the BB framework to the general $M$-ary discrete phase case. The resulting problem can be expressed as follows:
\begin{equation}
\label{eq:mary_problem}
\begin{aligned}
\min_{\mathbf{w}} \quad & f(\mathbf{w}) = \sum_{k=1}^{K} \frac{1}{|\mathbf{h}_k^H \mathbf{w}|^2} \\
\text{s.t.} \quad & w_1 = 1, w_n \in \mathcal{W},  n = 2, \dots, N,
\end{aligned}
\end{equation}
where $\mathcal{W} \triangleq \{e^{j2\pi m/M}\}_{m=0}^{M-1}$ denotes the $M$-ary discrete phase set. Compared with the binary case, the search space grows to $M^{N-1}$. To maintain pruning efficiency, we develop a dual-layer lower bound that combines a user-wise bound with an aggregate system-level bound.
\subsubsection{Node decomposition}
Consider a node $S_d$ at depth $d$. Let $\mathcal F=\{1,\ldots,d\}$ and $\mathcal U=\{d+1,\ldots,N\}$ denote the sets of fixed and unfixed indices, respectively. Accordingly, we partition the beamforming vector as $\mathbf w=
\left[\begin{smallmatrix}
\mathbf w_{\mathcal F}\\
\mathbf w_{\mathcal U}
\end{smallmatrix}\right]$. For user $k$, the effective channel response can be decomposed as follows:
\begin{equation}
\mathbf h_k^H\mathbf w
=
\sum_{n=1}^{d} h_{kn}^* w_n
+
\sum_{n=d+1}^{N} h_{kn}^* w_n
\triangleq A_k+B_k,
\tag{17}
\end{equation}
where $A_k$ is determined by the fixed entries and $B_k$ collects the contribution of the unfixed entries.
\subsubsection{Layer 1: Individual lower bound}
By the triangle inequality, $|\mathbf h_k^H\mathbf w|
\le
|A_k|+\sum_{n=d+1}^{N}|h_{kn}|$. Define $|y_k|_{\max}\triangleq |A_k|+\sum_{n=d+1}^{N}|h_{kn}|$. Then, for any feasible completion of the unfixed phases, we have
\begin{equation}
\frac{1}{|\mathbf h_k^H\mathbf w|^2}\ge \frac{1}{|y_k|_{\max}^2}.
\end{equation}
This yields the individual lower bound as follows:
\begin{equation}
{\rm{LB}}_{\mathrm{indiv}}(S_d)\triangleq \sum_{k=1}^{K}\frac{1}{|y_k|_{\max}^2}\le f(\mathbf w).
\tag{18}
\end{equation}
\subsubsection{Layer 2: Aggregate lower bound}
To further exploit the coupling among users, define $\mathbf R_{\circ}\triangleq \sum_{k=1}^{K}\mathbf h_k\mathbf h_k^H$. Then, $\sum_{k=1}^{K}|\mathbf h_k^H\mathbf w|^2
=
\mathbf w^H\mathbf R_{\circ}\mathbf w$. Partition $\mathbf R_{\circ}$ according to $\mathcal F$ and $\mathcal U$ as follows:
\begin{equation}
\mathbf R_{\circ}
=
\begin{bmatrix}
\mathbf R_{\circ,\mathcal {FF}} & \mathbf R_{\circ,\mathcal {FU}}\\
\mathbf R_{\circ,\mathcal {UF}} & \mathbf R_{\circ,\mathcal {UU}}
\end{bmatrix}.
\end{equation}
The following lemma provides an upper bound on $\mathbf w^H\mathbf R_{\circ}\mathbf w$.
\vspace{-5pt}
\begin{lemma}
For any feasible assignment of the unfixed variables $\mathbf w_{\mathcal U}\in\mathcal W^{N-d}$, the quadratic form $\mathbf w^H\mathbf R_{\circ}\mathbf w$ satisfies
\begin{equation}
\begin{split}
\mathbf w^H\mathbf R_{\circ}\mathbf w &\le {\rm{UB}}_{\mathrm{tot}}\triangleq
\mathbf w_{\mathcal F}^H\mathbf R_{\circ,\mathcal {FF}}\mathbf w_{\mathcal F}\\
&+
2\left\|\mathbf w_{\mathcal F}^H\mathbf R_{\circ,\mathcal {FU}}\right\|_1
+
(N-d)\lambda_{\max}(\mathbf R_{\circ,\mathcal {UU}}).
\end{split}
\end{equation}
\end{lemma}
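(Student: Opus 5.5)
The plan is to mirror the proof of Lemma~\ref{Lemma_UB_Binary}, adapting each step to complex Hermitian quantities and to unit-modulus entries in $\mathcal W$. First, I will expand the Hermitian quadratic form blockwise according to $\mathcal F$ and $\mathcal U$. Since $\mathbf R_{\circ}$ is Hermitian, $\mathbf R_{\circ,\mathcal{UF}}=\mathbf R_{\circ,\mathcal{FU}}^H$, and so
\begin{equation*}
\mathbf w^H\mathbf R_{\circ}\mathbf w=\mathbf w_{\mathcal F}^H\mathbf R_{\circ,\mathcal{FF}}\mathbf w_{\mathcal F}+2\Re\left\{\mathbf w_{\mathcal F}^H\mathbf R_{\circ,\mathcal{FU}}\mathbf w_{\mathcal U}\right\}+\mathbf w_{\mathcal U}^H\mathbf R_{\circ,\mathcal{UU}}\mathbf w_{\mathcal U}.
\end{equation*}
The first term is real, because it is a Hermitian form, and it depends only on the fixed entries. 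It therefore appears unchanged in $\mathrm{UB}_{\mathrm{tot}}$.

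For the cross term, let $\mathbf c^T\triangleq\mathbf w_{\mathcal F}^H\mathbf R_{\circ,\mathcal{FU}}\in\mathbb C^{1\times(N-d)}$. I will use $\Re\{z\}\le|z|$ together with the triangle inequality to get
\begin{equation*}
\Re\left\{\sum_{n}c_n w_n\right\}\le\sum_n|c_n||w_n|=\sum_n|c_n|.
\end{equation*}
The final equality holds because every $w_n\in\mathcal W$ has unit modulus. This gives the bound $2\|\mathbf w_{\mathcal F}^H\mathbf R_{\circ,\mathcal{FU}}\|_1$. The only point needing care is that the cross term must be kept as $2\Re\{\cdot\}$ rather than written as a sign-symmetric real product, as in the binary case.

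For the last term, $\mathbf R_{\circ}=\sum_k\mathbf h_k\mathbf h_k^H$ is Hermitian positive semidefinite. Its principal submatrix $\mathbf R_{\circ,\mathcal{UU}}$ is therefore also Hermitian positive semidefinite, with real eigenvalues. The complex Rayleigh quotient bound then gives
\begin{equation*}
\mathbf w_{\mathcal U}^H\mathbf R_{\circ,\mathcal{UU}}\mathbf w_{\mathcal U}\le\lambda_{\max}(\mathbf R_{\circ,\mathcal{UU}})\|\mathbf w_{\mathcal U}\|_2^2.
\end{equation*}
Because each entry of $\mathbf w_{\mathcal U}$ has unit modulus, $\|\mathbf w_{\mathcal U}\|_2^2=N-d$.

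Summing the three bounds yields $\mathrm{UB}_{\mathrm{tot}}$. I do not expect a real obstacle here. The one step to state carefully is the Hermitian handling of the cross term, which is what makes the factor $2$ and the $\ell_1$-norm bound valid for complex entries.
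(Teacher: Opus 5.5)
Your proposal is correct and follows the same route as the paper, whose proof just says ``similar to the proof of Lemma~\ref{Lemma_UB_Binary}'': block expansion, an $\ell_1$ bound on the cross term using unit-modulus entries, and the Rayleigh-quotient bound on the free block with $\|\mathbf w_{\mathcal U}\|_2^2=N-d$. Writing the cross term explicitly as $2\Re\{\mathbf w_{\mathcal F}^H\mathbf R_{\circ,\mathcal{FU}}\mathbf w_{\mathcal U}\}$ via Hermitian symmetry is the complex-valued adaptation that the paper leaves implicit.
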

\vspace{-5pt}
\begin{IEEEproof}
Similar to the proof of Lemma \ref{Lemma_UB_Binary}.
\end{IEEEproof}
By the arithmetic-harmonic mean inequality, we have
\begin{equation}
\sum_{k=1}^{K}\frac{1}{|\mathbf h_k^H\mathbf w|^2}
\ge
\frac{K^2}{\sum_{k=1}^{K}|\mathbf h_k^H\mathbf w|^2}
=
\frac{K^2}{\mathbf w^H\mathbf R_{\circ}\mathbf w}.
\end{equation}
Hence, the aggregate lower bound is
\begin{equation}
{\rm{LB}}_{\mathrm{agg}}(S_d)\triangleq \frac{K^2}{{\rm{UB}}_{\mathrm{tot}}}\le f(\mathbf w).
\end{equation}
\subsubsection{Combined pruning criterion}
Since both ${\rm{LB}}_{\mathrm{indiv}}(S_d)$ and ${\rm{LB}}_{\mathrm{agg}}(S_d)$ are valid lower bounds of the objective function $f(\mathbf{w})$, we combine them as follows:
\begin{equation}\label{LB_Mary}
{\rm{LB}}(S_d)\triangleq \max\big({\rm{LB}}_{\mathrm{indiv}}(S_d),{\rm{LB}}_{\mathrm{agg}}(S_d)\big)
\le f(\mathbf w).
\end{equation}
By \eqref{LB_Mary}, the proposed global discrete BB algorithm prunes only nodes that cannot contain the global optimum. Therefore, it converges to the globally optimal solution. The complete procedure is summarized in Algorithm \ref{alg:GDBB_compact}.

\begin{algorithm}[t]
\algsetup{linenosize=\tiny} \scriptsize
\caption{BB Algorithm for $M$-ary Phase Beamforming}
\label{alg:GDBB_compact}
\begin{algorithmic}[1]
\REQUIRE $\{\mathbf{h}_k\}$, $\mathcal{W}$, $N$
\STATE $f^*\leftarrow\infty$, $\mathcal{Q}\leftarrow\{\mathcal{S}_1\}$ with $w_1=1$
\WHILE{$\mathcal{Q}\neq\emptyset$}
\STATE Pop $\mathcal{S}_d$ (min ${\rm{LB}}$)
\IF{${\rm{LB}} < f^*$}
\IF{$d = N$}
\STATE $f^*\leftarrow\min\big(f^*, f(\mathbf{w})\big)$
\ELSE
\FOR{$w_{d+1}\in\mathcal{W}$}
\STATE Push $\mathcal{S}_{d+1}$ if ${\rm{LB}} < f^*$
\ENDFOR
\ENDIF
\ENDIF
\ENDWHILE
\RETURN $\mathbf{w}^*, f^*$
\end{algorithmic}
\end{algorithm}

\subsection{Continuous-Phase Design}
While discrete-phase beamforming is attractive for its low cost, the continuous-phase design provides a useful performance upper bound. In the sequel, we relax the phase constraint to $\theta_n\in[0,2\pi)$ and develop a spatial BB (SBB) algorithm that attains the global optimum within a prescribed tolerance $\epsilon>0$.
\subsubsection{Epigraph reformulation and anchor-based lifting}
To handle the non-convex unit-modulus constraint $|w_n|=1$ and the sum-of-inverses objective, we first rewrite problem \eqref{eq:sum_inverse_gain} in epigraph form by introducing auxiliary variables $\{t_k\}$:
\begin{equation}\label{Contnuous_Problem}
\begin{aligned}
\min_{\mathbf w,\{t_k\}} \quad & \sum_{k=1}^{K} t_k \\
\text{s.t.} \quad & t_k\,\mathbf w^H \mathbf H_k \mathbf w \ge 1,\quad k=1,\ldots,K, \\
& |w_n|=1,\quad n=1,\ldots,N,
\end{aligned}
\end{equation}
where $\mathbf H_k=\mathbf h_k\mathbf h_k^H \succeq 0$. Next, define the augmented vector $\tilde{\mathbf w}=\left[\begin{smallmatrix}1\\ \mathbf w\end{smallmatrix}\right]\in\mathbb C^{N+1}$ and the lifted matrix $\tilde{\mathbf W}=\tilde{\mathbf w}\tilde{\mathbf w}^H \in \mathbb C^{(N+1)\times(N+1)}$. Then $\tilde{\mathbf W}\succeq 0$, $\tilde{\mathbf W}_{n,n}=1$, and the quadratic term becomes $\mathbf w^H \mathbf H_k \mathbf w=\operatorname{Tr}(\tilde{\mathbf H}_k \tilde{\mathbf W})$, where $\tilde{\mathbf H}_k=\left[
\begin{smallmatrix}
0 & \mathbf 0^T\\
\mathbf 0 & \mathbf H_k
\end{smallmatrix}\right]$. Here, $\tilde{\mathbf W}_{n,n'}$ denotes the $(n+1,n'+1)$th element of $\tilde{\mathbf W}$ for $n,n'\in\{0,\ldots,N\}$.

For each SBB node, define the phase box as follows:
\begin{equation}\label{Sector_Cons}
\mathcal R \triangleq \prod_{n=1}^{N}[\theta_n^L,\theta_n^U],
\end{equation}
where $\theta_n^L$ and $\theta_n^U$ are the lower and upper phase bounds of the $n$th entry. Let $\Delta\theta_n\triangleq \theta_n^U-\theta_n^L$ and $\phi_n^{\mathrm{mid}}\triangleq (\theta_n^L+\theta_n^U)/2$. We assume $\Delta\theta_n\le \pi$ for all $n$; otherwise, the interval is split before solving the node subproblem. The convex hull of the circular sector associated with $[\theta_n^L,\theta_n^U]$ is described by
\begin{equation}\label{Convex_Hull}
\Re\!\left\{\tilde{\mathbf W}_{n,0}e^{-j\phi_n^{\mathrm{mid}}}\right\}
\ge
\cos\!\left(\frac{\Delta\theta_n}{2}\right),
\quad n=1,\ldots,N.
\end{equation}

\begin{figure*}[htbp]
    \centering
    \begin{subfigure}[t]{0.25\textwidth}  
        \centering
        \includegraphics[width=\textwidth]{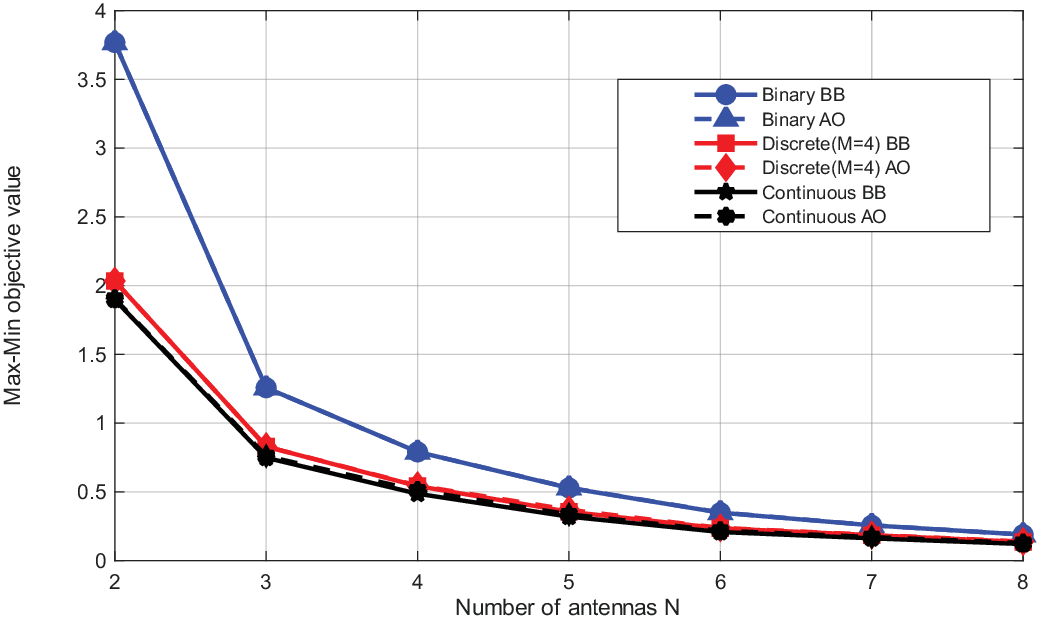}
        \caption{Antenna sweep ($K = 3$).}
        \label{fig:antenna_sweep_pdf}
    \end{subfigure}
    \begin{subfigure}[t]{0.25\textwidth}
        \centering
        \includegraphics[width=\textwidth]{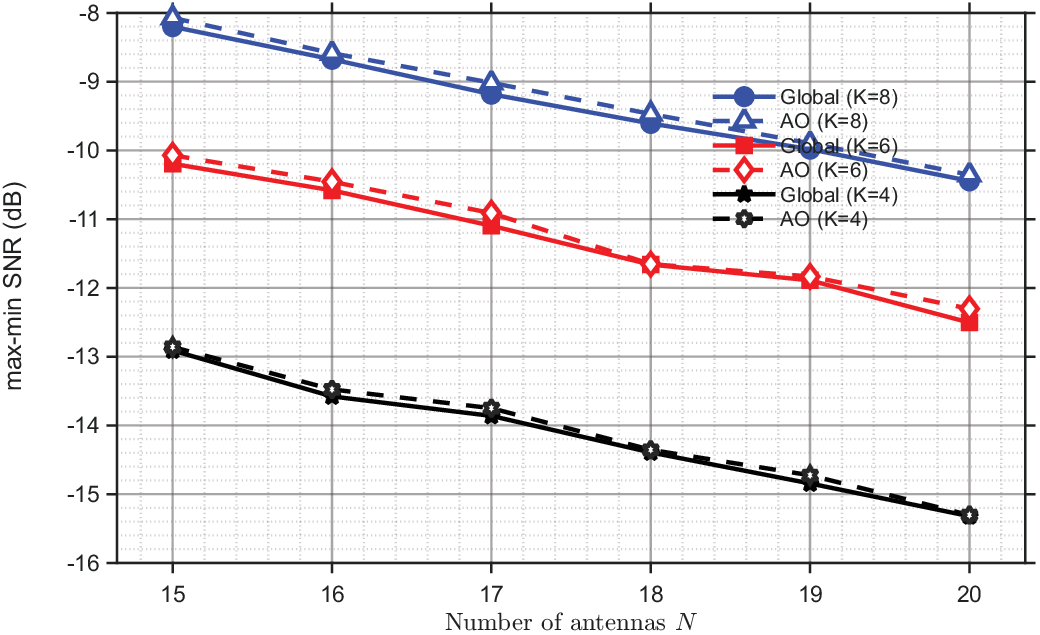}
        \caption{Binary phase ($M = 2$).}
        \label{fig:binary_phase_pdf}
    \end{subfigure}
    \begin{subfigure}[t]{0.25\textwidth}
        \centering
        \includegraphics[width=\textwidth]{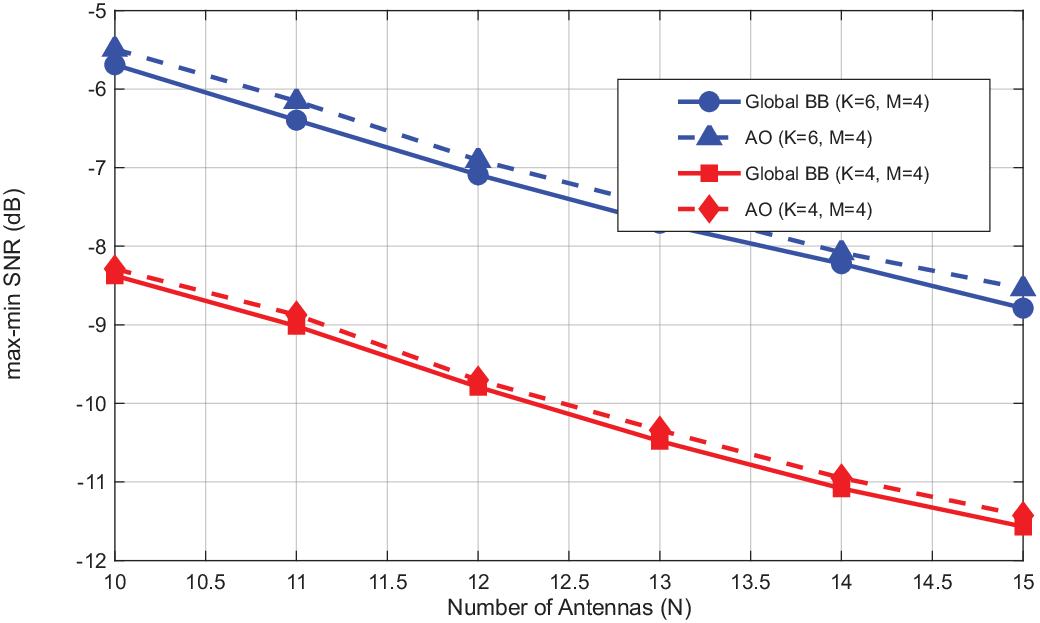}
        \caption{Discrete phase ($M = 4$).}
        \label{fig:discrete_phase_pdf}
    \end{subfigure}
    \begin{subfigure}[t]{0.25\textwidth}
        \centering
        \includegraphics[width=\textwidth]{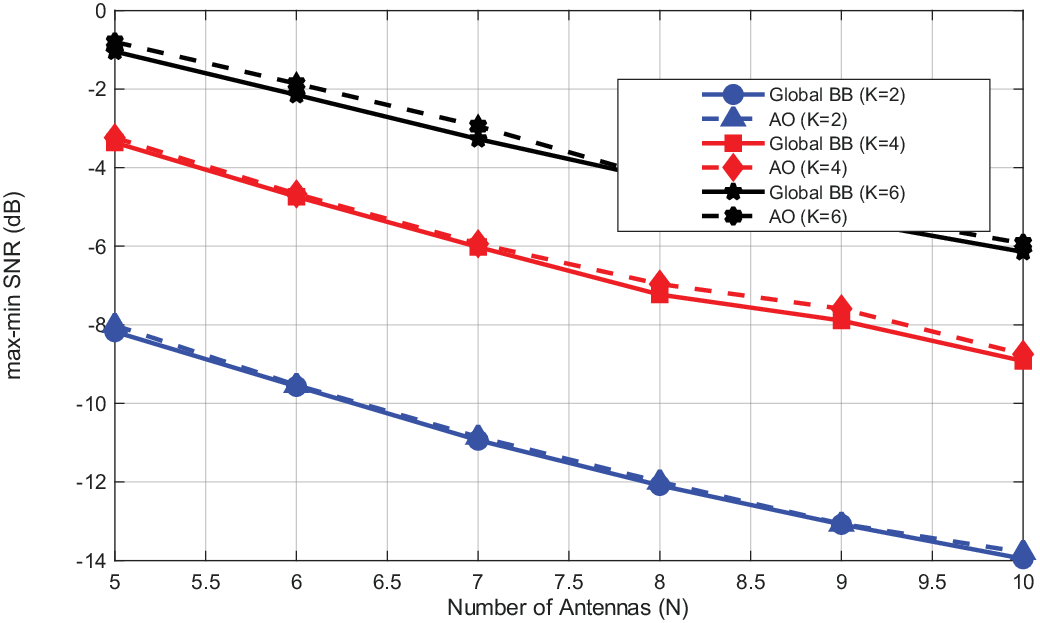}
        \caption{Continuous phase.}
        \label{fig:continuous_phase_pdf}
    \end{subfigure}
    \caption{Performance comparison of the proposed BB algorithms under various configurations: (a) objective value vs. $N$; (b) global vs. AO for binary phases; (c) global vs. AO for $M=4$ discrete phases; (d) global vs. AO for continuous phases.}
    \label{fig:combined_results_pdf}
\end{figure*}

\subsubsection{Convex relaxation over a subregion}
Over a given phase subregion $\mathcal R$, we solve the following semidefinite relaxation:
\begin{equation}\label{SBB_LB_Problem}
\begin{aligned}
\min_{\tilde{\mathbf W}\succeq 0,\{t_k\}} \quad & \sum_{k=1}^{K} t_k \\
\text{s.t.} \quad &
\begin{bmatrix}
t_k & 1\\
1 & \operatorname{Tr}(\tilde{\mathbf H}_k\tilde{\mathbf W})
\end{bmatrix}
\succeq 0,\quad k=1,\ldots,K, \\
& \Re\!\left\{\tilde{\mathbf W}_{n,0}e^{-j\phi_n^{\mathrm{mid}}}\right\}
\ge
\cos\!\left(\frac{\Delta\theta_n}{2}\right),
 n=1,\ldots,N, \\
& \tilde{\mathbf W}_{0,0}=1,\tilde{\mathbf W}_{n,n}=1,n=1,\ldots,N.
\end{aligned}
\end{equation}
Let ${\rm{LB}}_{\mathcal R}$ denote the optimal value of \eqref{SBB_LB_Problem}, and let $f_{\mathcal R}^{\star}$ denote the optimal value of the original problem \eqref{Contnuous_Problem} restricted to the subregion $\mathcal R$. Then ${\rm{LB}}_{\mathcal R}$ is a valid lower bound on $f_{\mathcal R}^{\star}$, which is detailed in the following theorem.
\vspace{-5pt}
\begin{theorem}\label{BB_LB_Continuous}
For any subregion $\mathcal R$, ${\rm{LB}}_{\mathcal R}\le f_{\mathcal R}^{\star}$. Moreover, let $\{\mathcal R^{(\ell)}\}_{\ell=1}^{\infty}$ be any nested sequence of subregions, where $\mathcal R^{(\ell)}=\prod_{n=1}^{N}[\theta_n^{L,(\ell)},\theta_n^{U,(\ell)}]$ and $\max_n \Delta\theta_n^{(\ell)}\to 0$. If the limiting phase vector is $\bm\theta^\star=[\theta_1^\star,\ldots,\theta_N^\star]^T$, the associated beamformer is $\mathbf w^\star=[e^{j\theta_1^\star},\ldots,e^{j\theta_N^\star}]^T$, and $\mathbf h_k^H\mathbf w^\star\neq 0$ for all $k$, then ${\rm{LB}}_{\mathcal R^{(\ell)}}\to f(\mathbf w^\star)$.
\end{theorem}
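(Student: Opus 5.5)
The plan has two parts: validity of the relaxation, then convergence as the boxes shrink.

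\textbf{Validity.} Take any feasible point of \eqref{Contnuous_Problem} whose phases lie in $\mathcal R$, i.e., $w_n=e^{j\theta_n}$ with $\theta_n\in[\theta_n^L,\theta_n^U]$. Set $\tilde{\mathbf W}=\tilde{\mathbf w}\tilde{\mathbf w}^H$ and $t_k=1/\operatorname{Tr}(\tilde{\mathbf H}_k\tilde{\mathbf W})$. Then $\tilde{\mathbf W}\succeq0$ and all diagonal entries equal one. The $2\times2$ LMI holds because $t_k\operatorname{Tr}(\tilde{\mathbf H}_k\tilde{\mathbf W})\ge1$ with $t_k>0$, which by the Schur complement is equivalent to positive semidefiniteness. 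For the sector constraint, $\tilde{\mathbf W}_{n,0}=w_n$, so
\[
\Re\{w_ne^{-j\phi_n^{\mathrm{mid}}}\}=\cos(\theta_n-\phi_n^{\mathrm{mid}})\ge\cos(\Delta\theta_n/2),
\]
since $|\theta_n-\phi_n^{\mathrm{mid}}|\le\Delta\theta_n/2\le\pi/2$ and cosine is decreasing on $[0,\pi]$. Hence every feasible point of the restricted original problem maps to a feasible point of \eqref{SBB_LB_Problem} with the same objective value. Minimizing over both problems gives ${\rm LB}_{\mathcal R}\le f^\star_{\mathcal R}$.

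\textbf{Convergence.} The first step is to show that the relaxation becomes tight. Let $\tilde{\mathbf W}^{(\ell)}$ be any feasible matrix for $\mathcal R^{(\ell)}$, and write $u_n=\tilde{\mathbf W}^{(\ell)}_{n,0}$. Positive semidefiniteness of the $2\times2$ principal minor on indices $\{0,n\}$, together with unit diagonals, gives $|u_n|\le1$. The sector constraint gives $\Re\{u_ne^{-j\phi_n^{\mathrm{mid}}}\}\ge\cos(\Delta\theta_n^{(\ell)}/2)\to1$. Combining these, $|u_n-e^{j\phi_n^{\mathrm{mid},(\ell)}}|^2\le 2-2\cos(\Delta\theta_n^{(\ell)}/2)\to0$. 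Because the boxes are nested and shrink to $\bm\theta^\star$, $\phi_n^{\mathrm{mid},(\ell)}\to\theta_n^\star$, so $u_n\to e^{j\theta_n^\star}=w_n^\star$.

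Next, control the off-diagonal entries. Since $\tilde{\mathbf W}\succeq0$ with unit diagonal, the Schur complement with respect to the $(0,0)$ entry gives $\mathbf E\triangleq\tilde{\mathbf W}_{1:N,1:N}-\mathbf u\mathbf u^H\succeq0$. Its trace is $\sum_n(1-|u_n|^2)\to0$, so $\|\mathbf E\|\to0$. Hence $\tilde{\mathbf W}^{(\ell)}\to\tilde{\mathbf w}^\star\tilde{\mathbf w}^{\star H}$ uniformly over the feasible set of node $\ell$. This quantitative collapse of the feasible set is the main obstacle, and I intend to handle it exactly through the Schur-complement trace argument.

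It then follows that $\operatorname{Tr}(\tilde{\mathbf H}_k\tilde{\mathbf W}^{(\ell)})\to|\mathbf h_k^H\mathbf w^\star|^2>0$ uniformly, where positivity is the stated assumption. For large $\ell$ the trace is therefore bounded below by some $c>0$. The optimal $t_k$ at each node equals $1/\operatorname{Tr}(\tilde{\mathbf H}_k\tilde{\mathbf W})$, and this map is continuous on $[c,\infty)$. Consequently ${\rm LB}_{\mathcal R^{(\ell)}}\to\sum_k|\mathbf h_k^H\mathbf w^\star|^{-2}=f(\mathbf w^\star)$.

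A consistency check gives the same limit from above. By the first part, ${\rm LB}_{\mathcal R^{(\ell)}}\le f^\star_{\mathcal R^{(\ell)}}\le f(\mathbf w^{(\ell)})$ for any feasible $\mathbf w^{(\ell)}\in\mathcal R^{(\ell)}$. By continuity of $f$ near $\mathbf w^\star$, $f(\mathbf w^{(\ell)})\to f(\mathbf w^\star)$. This matches the limit obtained from below.

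One notational caveat: $\mathbf w$ has entries $e^{-j\theta_n}$ in the model but $\mathbf w^\star$ has entries $e^{j\theta_n^\star}$. I will follow the statement's convention consistently, which amounts to a sign flip of the phases.
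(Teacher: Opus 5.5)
Your proposal is correct and follows the same route as the paper: the rank-one lifting $\tilde{\mathbf W}=\tilde{\mathbf w}\tilde{\mathbf w}^H$ gives ${\rm LB}_{\mathcal R}\le f^\star_{\mathcal R}$. For the limit, the sector constraints together with $\tilde{\mathbf W}\succeq 0$ and the unit diagonal collapse the feasible set onto $\tilde{\mathbf w}^\star\tilde{\mathbf w}^{\star H}$, so the relaxation becomes exact. Your Schur-complement trace bound and the uniform positive lower bound on $\operatorname{Tr}(\tilde{\mathbf H}_k\tilde{\mathbf W})$ supply the rigor for the ``limiting matrix is rank one'' step, which the paper only asserts.
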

\vspace{-5pt}
\begin{IEEEproof}
Any feasible $\mathbf w\in\mathcal R$ induces a feasible lifted matrix $\tilde{\mathbf W}=\tilde{\mathbf w}\tilde{\mathbf w}^H$ for \eqref{SBB_LB_Problem}. Hence ${\rm{LB}}_{\mathcal R}\le f_{\mathcal R}^{\star}$. For the asymptotic claim, as $\max_n \Delta\theta_n^{(\ell)}\to 0$, the sector constraints force $\tilde {\mathbf{W}}_{n,0}\to e^{j\theta_n^\star}$. Together with $\tilde{\mathbf W}\succeq \mathbf 0$, $\tilde{\mathbf W}_{0,0}=1$, and $\tilde{\mathbf W}_{n,n}=1$, this implies that the limiting matrix is rank one and equals $[1,(\mathbf w^\star)^T]^H[1,(\mathbf w^\star)^T]$. Therefore, the relaxation becomes exact in the limit, i.e., ${\rm{LB}}_{\mathcal R^{(\ell)}}\to f(\mathbf w^\star)$.
\end{IEEEproof}
\subsubsection{Feasible upper bound}
Given the relaxed solution $\tilde{\mathbf W}^{\star}$ of \eqref{SBB_LB_Problem}, we construct two feasible beamformers $\bar{\mathbf w}\triangleq[\bar w_1,\ldots,\bar w_N]^T$ and $\check{\mathbf w}\triangleq[\check w_1,\ldots,\check w_N]^T$ as follows:
\begin{equation}
\bar\theta_n \triangleq ({\theta_n^L+\theta_n^U})/{2},\qquad \bar w_n \triangleq e^{j\bar\theta_n},
\end{equation}
and
\begin{equation}
\check\theta_n \triangleq \Pi_{[\theta_n^L,\theta_n^U]}\big(\arg(\tilde {\mathbf{W}}_{n0}^{\star})\big),\qquad \check w_n \triangleq e^{j\check\theta_n},
\end{equation}
where $\Pi_{[a,b]}(x)\triangleq \min\{\max\{x,a\},b\}$. Both $\bar{\mathbf w}$ and $\check{\mathbf w}$ are feasible for the original problem over $\mathcal R$. We therefore define
\begin{equation}
{\rm{UB}}_{\mathcal R}\triangleq \min\{f(\bar{\mathbf w}),\,f(\check{\mathbf w})\},
\end{equation}
which satisfies $f_{\mathcal R}^{\star}\le {\rm{UB}}_{\mathcal R}$. In particular, along any nested sequence with $\max_n \Delta\theta_n^{(\ell)}\to 0$, we have $\bar{\mathbf w}^{(\ell)}\to \mathbf w^\star$, and thus ${\rm{UB}}_{\mathcal R^{(\ell)}}\to f(\mathbf w^\star)$ by continuity of $f(\mathbf w)$. Combining this fact with Lemma \ref{BB_LB_Continuous} yields
\begin{equation}
{\rm{UB}}_{\mathcal R^{(\ell)}}-{\rm{LB}}_{\mathcal R^{(\ell)}}\to 0.
\tag{31}
\end{equation}
The resulting SBB procedure is summarized in Algorithm \ref{alg:SBB_compact}.

\begin{algorithm}[t]
\algsetup{linenosize=\tiny} \scriptsize
\caption{BB for Continuous Phase Beamforming}
\label{alg:SBB_compact}
\begin{algorithmic}[1]
\REQUIRE $[0,2\pi)^N$, $\epsilon$
\STATE ${\rm{UB}}\leftarrow+\infty$, $\mathcal{L}\leftarrow\{[0,2\pi)^N\}$
\WHILE{$\mathcal{L}\neq\emptyset$}
\STATE Pick $\mathcal{R}\in\mathcal{L}$, solve semidefinite programming $\Rightarrow ({\rm{LB}},\tilde{\mathbf{W}})$
\IF{${\rm{LB}} < {\rm{UB}}-\epsilon$}
\STATE ${\rm{UB}}\leftarrow\min\big({\rm{UB}},\,\min\big\{f(\bar{\mathbf w}),\,f(\check{\mathbf w})\big\}\big)$
\IF{${\rm{LB}} < {\rm{UB}}-\epsilon$}
\STATE Branch $\mathcal{R}$ along $\max \Delta\theta_n$
\ENDIF
\ENDIF
\ENDWHILE
\RETURN $\mathbf{w}^*$
\end{algorithmic}
\end{algorithm}

\section{Numerical Results}
In this section, we provide numerical results to evaluate the proposed BB algorithms. The channels are modeled as independent and identically distributed Rayleigh fading, i.e., each entry of the channel vector follows $\mathcal{CN}(0,1)$. The noise power is normalized to $\sigma^2=1$, and the total transmit power is set to $P=10$ dBm. All results are averaged over $10^3$ independent channel realizations.

{\figurename}~{\ref{fig:antenna_sweep_pdf}} shows the objective value $f(\mathbf w)$ versus the number of antennas $N$ under different phase resolutions. As $N$ increases, the objective value decreases, which indicates an improved max-min SNR due to the larger beamforming gain. Continuous-phase beamforming achieves the best performance and serves as a benchmark. The $M=4$ discrete-phase scheme closely approaches this benchmark, whereas the binary-phase case ($M=2$) suffers a clear performance loss.

We next examine the optimality of the proposed BB framework by comparing it with the AO-based method. As shown in {\figurename}~{\ref{fig:binary_phase_pdf}} and {\figurename}~{\ref{fig:discrete_phase_pdf}}, for binary phases and $M=4$ discrete phases, respectively, the proposed BB algorithms consistently achieve smaller objective values than the AO method over different values of $N$ and $K$. At the same time, the performance gap is generally small, which indicates that the AO method already provides a near-optimal solution in many cases and is therefore a promising low-complexity candidate in practice. In contrast, the proposed BB framework attains the global optimum through systematic branching and pruning, and thus serves as a rigorous benchmark for quantifying the optimality gap of practical designs.

{\figurename}~{\ref{fig:continuous_phase_pdf}} reports the results for continuous-phase optimization. The SBB algorithm consistently outperforms the AO method and is particularly advantageous in the small-$N$ regime. As $N$ increases, the performance gap gradually decreases, which again suggests that the AO method becomes increasingly competitive in the large-array regime. Nevertheless, SBB still achieves the best performance and provides a reliable benchmark for evaluating low-complexity continuous-phase designs in hardware-constrained multi-antenna systems.

\section{Conclusion}
This letter studied the joint design of analog beamforming and power allocation for a multiuser TDMA system. We developed globally optimal BB algorithms for binary, $M$-ary discrete, and continuous phase shifts. The proposed framework also offers useful insights for future hybrid beamforming design under practical hardware constraints.
\bibliographystyle{IEEEtran}
\bibliography{mybib}
\end{document}